\newtheorem{theorem}{Theorem}
\newenvironment{proof}[1][Proof]{\noindent\textbf{#1.} }{\ \rule{0.5em}{0.5em}}
\title{Nonviolation of Bell's Inequality in Translation Invariant Systems}
\author{Thiago R. de Oliveira\thanks{E-mail: \email{tro@if.uff.br}} \and A. Saguia \and M. S. Sarandy}
\institute{Instituto de F\'{\i}sica, Universidade Federal Fluminense, Av. Gal. Milton
Tavares de Souza s/n, Gragoat\'a 24210-346, Niter\'oi, RJ, Brazil}
\pacs{03.67.-a}{Quantum information}
\pacs{75.10.Pq}{Spin chain models}
\pacs{05.30.Rt}{Quantum phase transitions}
\abstract{
The nature of quantum correlations in strongly correlated systems has been a subject of 
intense research. In particular, it has been realized that entanglement and quantum discord 
are present at quantum phase transitions and are able to characterize them. Surprisingly, 
it has been shown for 
a number of different systems that qubit pairwise states, even when highly entangled, do 
not violate Bell's inequalities, being in this sense local. Here we show that such a local 
character of quantum correlations traces back to the the monogamy trade-off obeyed 
by bipartite Bell correlations, being in fact general for translation invariant systems. We 
illustrate this result in a quantum spin chain with a soft breaking of translation symmetry. 
In addition, we provide an extension of the monogamy inequality to the $N$-qubit scenario, 
showing that the bound increases with $N$ and providing examples of its saturation through 
uniformly generated random pure states.
}
\begin{document}

\maketitle

\section{Introduction}
Correlations are a central concept in science, if not the essence of it. They typically arise 
from interactions, being responsible for plenty of phenomena. The antiferromagnet exchange 
interaction, for example, impose a correlation between the poles of two magnets: they align 
in opposite direction. Indeed, many of the most interesting effects in condensed matter has 
their origins in strongly correlated systems. A standard representative is continuum phase 
transitions, where a macroscopic drastic change in a system occurs due to the onset of 
long-range correlations: a finite magnetization as the temperature decreases, for example. 
On the other side, in the last decade, the study of correlations {\it per se} has gained a 
great deal of attention due to the applications of quantum computation and information. The 
interest comes from the fact that quantum mechanics allows for more general correlations than 
those available in classical systems. Such correlations, which can be quantified, e.g., by 
entanglement or quantum discord, are general resources for protocols to implement quantum 
tasks~\cite{Nielsen:book,Dakic:12,Gu:12,Madhok:12}. 

Recently, condensed matter and quantum information communities have exchanged knowledge about 
correlations~\cite{Amico08,Modi:12}. In particular, the roles played by both entanglement and 
quantum discord at a quantum phase transition (QPT) have generated great 
interest~\cite{Osterloh02,Wu:04,Sarandy:09}. This motivation is based on the fact that QPTs 
occur at zero temperature, where the state of the system is typically pure, and then should be 
driven by quantum correlations. Note that long-range correlation (spin-spin, for example) is at
the origin of the universality paradigm in critical phenomena.  Remarkably, while classical 
correlation and discord are 
long-ranged~\cite{Maziero:12}, bipartite entanglement is not.
Such an important observation can be understood in terms of a monogamy trade-off: 
bipartite entanglement obeys a much stricter distribution law than classical correlation and 
quantum discord~\cite{Streltsov:12,Braga:12}. In turn, monogamy inequalities may imply in 
constraints of remarkable consequences for the behavior of correlations. 

However, besides entanglement and discord, other nonclassical correlations are allowed by 
quantum mechanics, e.g., Bell correlations. The violation of a Bell inequality indicates 
nonlocality. This is a concept independent of other kind of correlations. It is clearly 
inequivalent to quantum discord, since Bell inequalities are satisfied by separable states, 
while discord may be nonvanishing. Moreover, it is not equivalent to entanglement either, 
since the the presence of entanglement for mixed states does not necessarily imply in a 
violation of Bell's inequalities~\cite{Werner89}. A natural challenge is therefore to 
understand the nature of each correlation measure in strongly correlated quantum systems. 
Indeed, different correlations may be associated with resources for distinct tasks. 
As for entanglement and discord, characterization of QPTs through Bell correlations between 
qubits through the Clauser-Horne-Shimony-Holt (CHSH) inequality has recently been 
addressed~\cite{Altintas12,Batle10,Deng11,Justino11}. However, it is rather surprising that 
the CHSH inequality, when applied for any pair of qubits in critical spin latices, has been 
observed to be nonviolated in a number of different systems (see, e.g., 
Refs.~\cite{Altintas12,Batle10,Deng11,Justino11,Wang:02}), even for highly entangled spin pairs. 
Thus, it appears that, for a two-spin system within a lattice in the thermodynamic limit, 
the long-range correlations typical of a QPT, which ensures a considerable amount of pairwise
entanglement, is unable to reveal nonlocal effects. It has been an open problem to identify the 
origin of this general absence of violation. 

In this work, we give a simple explanation for this negative result. More specifically, we 
establish a {\it no-go} theorem showing that {\it the monogamy of Bell correlations imposed by 
the CHSH inequality forbids the manifestation of nonlocality by any entangled spin pair in 
a translational invariant lattice}. Therefore, a basic principle, namely, monogamy, enforces 
the nature not only of entanglement and quantum discord, but also introduces a new kind of 
general restriction in the behavior of Bell correlations. As an illustration, we consider a 
critical spin chain with a soft breaking of translation symmetry. Moreover, {\it we also extend 
the monogamy of CHSH inequality for $N$ particles and investigate its behavior 
for random states}. 

\section{Monogamy of entanglement and nonlocality} 
It is known that entanglement cannot be freely shared among the parts of a composite 
system, i.e., it is monogamous. Indeed, in a a pure state of 
three qubits A, B and C, if A is entangled with B, it cannot be 
entangled with C. For mixed states, monogamy is not strict, but there are 
bounds on its validity. The most famous relation expressing entanglement monogamy 
has been obtained by Coffman {\it et al.}~\cite{Coffman00} for 
three particles and the generalized by Osborne and Verstraete~\cite{Osborne06} 
for $N$ qubits, reading 
$C^{2}_{1,2}+\cdots +C^{2}_{1,N}\leq C^{2}_{1,2\cdots N}$, 
with $C_{i,j}$ denoting entanglement between qubits $i$ and $j$ as measured 
by concurrence~\cite{Wootters:98}, with the upper bound $C_{1,2\cdots N}$ denoting the 
concurrence between qubit 1 and all the rest of the system. Note that this upper bound 
does not increase with the number of particles, since $C^{2}_{1,2\cdots N}\leq 1$ for
any $N$. For the case of quantum discord, a monogamy relationship such as that obeyed by concurrence 
does not hold in general~\cite{Streltsov:12}, but a softer monogamy relationship has been recently 
established~\cite{Braga:12}. 

Concerning Bell correlations, let us consider three spins-1/2 particles in a quantum state described by a 
density operator $\rho$ and a standard Bell experiment where each party chooses two directions in a 
Stern-Gerlach apparatus. We then define the Bell operator $B_{ij}$ acting on the Hilbert space 
${\cal{H}}={\cal{H}}_i \otimes {\cal{H}}_j$ $(i,j=1,2,3)$ for parties $i$ and $j$ as
\begin{equation}
B_{ij}=\hat{a}_i \cdot \vec{\sigma} \otimes (\hat{a}_j+\hat{a}_j^\prime) \cdot \vec{\sigma} + 
\hat{a}_i^\prime \cdot \vec{\sigma} \otimes (\hat{a}_j-\hat{a}_j^\prime) \cdot \vec{\sigma} \, ,
\end{equation}
where $\hat{a}_i,\hat{a}_i^\prime,\hat{a}_j,\hat{a}_j^\prime$ are unit vectors in $\mathbb{R}^{3}$ and 
$\vec{\sigma}$ denotes a Pauli vector operator. By taking the expectation value 
$\langle B_{i j} \rangle=\textrm{Tr}\left(\rho \,  B_{i j} \right)$ and maximizing over measurements 
directions in 
$\langle B_{i j} \rangle$, Toner and Verstraete~\cite{Toner06} have 
established the monogamy trade-off \footnote{Although the proof suppose a pure state of three qubits, 
the result also holds for mixed states by a convexity argument.}
\begin{equation}
\max_{1,2}\langle B_{12}\rangle^{2}+\max_{1,3}\langle B_{13}\rangle^{2}\leq8,
\label{eq: Bell Monog}
\end{equation}
where $\max_{i,j} = \max_{(\hat{a}_i,\hat{a}_i^\prime),(\hat{a}_j,\hat{a}_j^\prime)}$. 
Then, it follows from Eq.~(\ref{eq: Bell Monog}) that, if parties $1$  
and $2$ violate the CHSH inequality, namely, $\left| \langle B_{12} \rangle \right| > 2$, 
then parties $1$ and $3$ must necessarily obey it, i.e., $\left| \langle B_{13} \rangle \right| \le 2$. 
For the case of two qubits, an analytic expression for the maximum violation of the Bell 
inequality was obtained in Ref.~\cite{Horodecki95}. By defining the Bell correlation as
\begin{equation}
{\cal B}_{ij}=\max_{i,j}|\langle B_{i,j}\rangle| , 
\label{eq: Bell Operator}
\end{equation}
it is found that 
\begin{equation}
{\cal B}_{ij}=2\sqrt{u+u'} \, ,
\label{Analytical}
\end{equation}
where $u$ and $u'$ are the two largest eigenvalues of the matrix $U=T^{T}T$, with 
$T$ labeling the $3\times3$ matrix built from the correlations $t^{uv}=\textrm{Tr}\left(\rho \,\sigma_u \otimes \sigma_v\right)$. 
The monogamy bound (\ref{eq: Bell Monog}) can be directly derived from Eq.~(\ref{Analytical}). 
In such a proof, the directions of $1$ that maximize the violations with $2$ and $3$ can be different (this 
is not allowed {\it a priori} by a similar proof in Ref.~\cite{Kurzynski11}). Moreover, the possibility of 
particle $1$ to share two singlets, one with particle $2$ and the other with particle $3$, which would violate the 
bound, is naturally forbidden, since each part can only have one qubit.

\section{QPTs, correlations, and CHSH inequality}
The investigation of entanglement and 
quantum discord at QPTs generated a large amount of research (see, e.g., Refs.~\cite{Amico08,Modi:12} 
and references therein), which has recognized them as useful measures to characterize a QPT. 
In this context, some remarkable properties surprisingly appeared. 
For example, it has been observed that bipartite entanglement
between individual particles is short-ranged. As mentioned, this result can be understood as 
consequence of monogamy: for one particle to have a reasonable amount of entanglement with a neighbor, 
its entanglement with particles far away must be negligible. On the other hand, genuine multipartite 
entanglement is typically present at quantum criticality (see, e.g., Ref.~\cite{Chandran07}). 
Concerning pairwise quantum discord, its long-range behavior in critical systems~\cite{Maziero:12}  
can also be traced back to the softer monogamy obeyed by quantum discord~\cite{Braga:12}.

Very recently, the CHSH inequality has been used as a tool 
to characterize QPTs~\cite{Batle10,Altintas12,Justino11,Deng11}. In turn, 
it has successfully performed this task, indicating QPTs through 
nonanalyticities in the derivatives of the Bell correlations. 
We observe that this method works whether or not violations of the CHSH inequality actually occur. 
In fact, for all of these works, it was found that, given any two qubits, 
violation of the CHSH inequality is never achieved, even for highly entangled qubit 
pairs. Remarkably, this curious behavior is also rooted in a monogamy relation, which is 
formalized in Theorem~\ref{t1} below.

\begin{theorem}
{\it Consider an arbitrary $N$-qubit composite system arranged in a lattice with translation 
invariance. In such a system, any (pure or mixed) state for two qubits cannot violate the CHSH 
inequality.} 
\label{t1}
\end{theorem}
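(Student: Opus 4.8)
The plan is to combine translation invariance with the Toner--Verstraete monogamy trade-off of Eq.~(\ref{eq: Bell Monog}). The central observation is that a translation invariant global state forces the two-qubit reduced state $\rho_{i,j}$ to depend only on the relative displacement $j-i$, so that the Bell correlation $\mathcal{B}_{ij}$ defined in Eq.~(\ref{eq: Bell Operator}) is a function of the separation alone; I would write $\mathcal{B}(d)$ for the common value shared by all pairs at separation $d$. I would also record the elementary fact that $\mathcal{B}_{ij}$ is symmetric under exchange of the two parties: swapping $i$ and $j$ sends the correlation matrix $T$ to its transpose, and since $T^{T}T$ and $TT^{T}$ share the same spectrum, the two largest eigenvalues $u,u'$ entering Eq.~(\ref{Analytical}) are unchanged.

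First I would fix an arbitrary separation $d$ and single out three collinear sites, say at positions $0$, $d$ and $2d$ (available for any fixed $d$ in the thermodynamic limit). The pairs $(0,d)$ and $(d,2d)$ carry the same displacement, hence by translation invariance have identical reduced states and identical Bell correlations, both equal to $\mathcal{B}(d)$; the pair $(d,0)$ equals $\mathcal{B}(d)$ as well by the exchange symmetry noted above. Next I would apply Eq.~(\ref{eq: Bell Monog}) to the three-qubit reduced state obtained by tracing out the remaining $N-3$ qubits, taking the central site at position $d$ as the shared party. Because $\mathcal{B}_{ij}$ is a functional of $\rho_{ij}$ only, this reduction does not alter the relevant correlations, and the footnote to Eq.~(\ref{eq: Bell Monog}) guarantees the bound persists for the resulting mixed state.

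The monogamy inequality then reads $\mathcal{B}(d)^{2}+\mathcal{B}(d)^{2}\le 8$, i.e. $2\,\mathcal{B}(d)^{2}\le 8$, so that $\mathcal{B}(d)\le 2$ for every $d$. Recalling that a violation of the CHSH inequality requires $\mathcal{B}_{ij}>2$, this shows that no pair of qubits, at any separation, can violate it, which is precisely the claim of Theorem~\ref{t1}.

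I expect the only genuine subtlety to be the existence and legitimacy of the auxiliary triple. One must ensure that for every separation a site at $2d$ is actually present---handled by the $N\to\infty$ limit, where every finite $d$ is admissible---and that applying a three-party bound to a sub-triple of the larger lattice is valid. The latter is the real crux, but it resolves cleanly: the Bell correlation of a pair depends only on its reduced density matrix, so tracing out the environment first and then invoking the mixed-state form of Eq.~(\ref{eq: Bell Monog}) is fully consistent.
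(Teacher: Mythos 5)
Your proposal is correct and takes essentially the same route as the paper's own proof: translation invariance produces two pairs with equal Bell correlation sharing a common middle qubit (your sites $0,\,d,\,2d$ with hub at $d$ are just a translate of the paper's $-r,\,0,\,r$ configuration with hub at $0$), and the Toner--Verstraete bound of Eq.~(\ref{eq: Bell Monog}) then gives $2\,\mathcal{B}(d)^{2}\le 8$, hence $\mathcal{B}(d)\le 2$. Two of your side remarks are in fact refinements of the paper's argument: your explicit use of the swap symmetry of $\mathcal{B}_{ij}$ (via the equal spectra of $T^{T}T$ and $TT^{T}$) justifies the step the paper states loosely as $\rho_{1,2}=\rho_{1,3}$, and your caveat about the existence of the third site at $2d$ is precisely the hidden hypothesis acknowledged in the paper's Erratum, which can fail for the farthest-apart pair in finite translation invariant chains with an even number of sites.
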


\begin{proof}
Let us denote the two-qubit reduced density operator for qubits $i$ and $j$ as $\rho_{i,j}$. 
Then, translation invariance implies that $\rho_{i,j}=\rho_{k,l}$ if $i-j=k-l$, i.e. the 
density operator depends only on the distance $r$ between the qubits. Now, take a subsystem 
composed by any two qubits at an arbitrary distance $r$, labeling them as $1$ and $2$. Then, 
join to this subsystem a third qubit (labeled as $3$) also at distance $r$ from qubit $1$, which 
can be represented by

\begin{figure}[th]
\centering {\includegraphics[angle=0,scale=0.3]{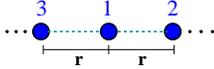}}
\caption{(Color online) Tripartite subsystem with qubit $1$ at fixed distance $r$ from qubits $2$ and $3$. }
\label{f1}
\end{figure} 

For such a state, we have in particular that $\rho_{1,2}=\rho_{1,3}$, which implies that 
$\langle B_{1,2}\rangle = \langle B_{1,3}\rangle$ for any fixed measurement direction. 
Therefore, from Eq.~(\ref{eq: Bell Monog}), we obtain that
\begin{equation}
2 \max_{1,2}\langle B_{1,2}\rangle^{2} \leq 8 \,\,\, \Longrightarrow \,\,\, 
\max_{1,2}\langle B_{1,2}\rangle \le 2.
\label{eq: Bell Monog-2}
\end{equation}
Hence, translation invariance and monogamy of Bell correlations together prevent the violation of 
the CHSH inequality for any two qubits of the system. 
\end{proof}

We should also mention previous works on the relation between symmetries and violation
of Bell inequalities. In Ref.~\cite{Terhal03}, it is established a relation between the possibility of 
sharing a state and the violation of a Bell inequality. As a consequence, no two-qubit pair can violate
the CHSH inequality in a system of $N$ qubits with permutation symmetry. Using this result, it has been 
shown in Ref.~\cite{Ramanathan11} that if one is only able to do collective measurements in $N_X$ spins than one
need at least $S_X > N_X$ measurement directions on each part to violate a Bell inequality.
It is worth mentioning that Werner himself, already in 1989, using a more abstract formalism in
terms of $C^\star$ algebras, considered the relation between symmetries and violation of Bell inequalities~\cite{Werner89B}.

Theorem 1 can also be rederived from the results of Ref.~\cite{Terhal03}. However, we have obtained 
it here from a new (and much simpler) approach, which is based on the monogamy of Bell 
inequalities. This not only simplified the proof, but also provided a connection between symmetry and 
monogamy of Bell correlations, which is potentially suitable for applications in condensed matter systems, 
as will be shown in the next section. 

\section{Bell correlations in spin chains with translation symmetry breaking}
Let us illustrate the results of the previous section in the dimerized Heisenberg spin 
chain, which is characterized by Heisenberg spin interactions where a soft breaking 
of the translation invariance is induced by the split of the system into two 
sublattices with spin interactions of strengths $J_1$ and $J_2$, respectively. 
Then, the dynamics is governed by the 
Hamiltonian
\begin{equation}
H = \sum_{i=1}^{N/2} \left( J_1 \, \vec{\sigma}_{2i-1} \cdot \vec{\sigma}_{2i} 
+ J_2 \, \vec{\sigma}_{2i} \cdot \vec{\sigma}_{2i+1} \right),
\label{HJ1J2}
\end{equation}
where $\vec{\sigma} = (\sigma_x,\sigma_y,\sigma_z)$ denotes the Pauli operator 
vector and $N$ is the number of spins in the chain (taken as an even number). Moreover, 
it is assumed $J_1>0$ and periodic boundary conditions are adopted, i.e. $\vec{\sigma}_{N+1} =  \vec{\sigma}_{1}$. 
At temperature $T=0$, the chain exhibits the following magnetic behavior: 
(i) in the limit $(J_2/J_1) \rightarrow 0$, the ground state is just an ensemble
of $N/2$ uncoupled dimers around strong bonds, with an energy gap separating the 
ground state from the first excited state; (ii) at the isotropic Heisenberg 
point $J_2/J_1 = 1$, the energy gap closes, with the system in a quantum critical regime; 
(iii) for the case of $J_2/J_1 > 1$, the system is noncritical, achieving another   
strongly dimerized ground state in the limit $(J_2/J_1) \rightarrow \infty$;  
(iv) for the case of $J_2/J_1 < 0$, the dimerization also occurs, with dimers coupled by ferromagnetic 
bonds. Entanglement for the dimerized chain has been discussed in Refs.~\cite{Chen:06,Hao:06}.

In order to evaluate the Bell correlations and then to investigate the monogamy 
bound~(\ref{eq: Bell Monog}), it is convenient to rewrite the eigenvalues $u$ and $u'$ defined in 
Eq.~(\ref{Analytical}) in terms of correlation functions. In general, this may lead to an expression for 
${\cal B}_{ij}$ that is typically  cumbersome. However, the symmetry of the Heisenberg 
interaction strongly constrains ${\cal B}_{ij}$. In fact, the matrix $T$ is already 
diagonal, with $t^{xx}$ , $t^{yy}$ and $t^{zz}$ as its diagonal elements and 
$t^{yy}=t^{xx}=t^{zz}$. Therefore, we can write ${\cal B}_{ij}$ as 
\begin{equation}
{\cal B}_{ij}=2 \sqrt{2} \left| \langle\sigma_{i}^{z}\sigma_{j}^{z}\rangle \right| .
\label{Bell-J1-J2}
\end{equation}

\begin{figure}[!ht]
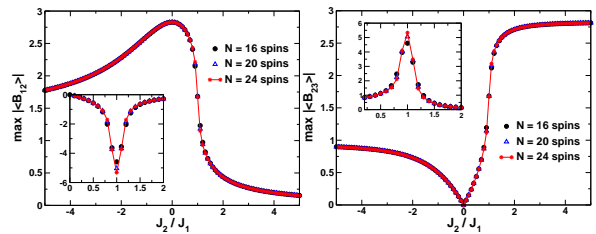

\begin{centering}
\psfrag{$\ell$}{$\ell$}
\includegraphics[scale=0.16]{g2.eps}\includegraphics[scale=0.16]{g3.eps}
\par\end{centering}
\caption{(Color online) Bell correlations ${\cal{B}}_{12}$ and ${\cal{B}}_{23}$ as a function of $J_2/J_1$. Insets: First derivatives of 
${\cal{B}}_{12}$ and ${\cal{B}}_{23}$ with respect to $J_2/J_1$.} 
\label{f2f3}
\end{figure}

We then compute the Bell correlations ${\cal B}_{ij}$ for finite chains up to $N=24$ sites via exact diagonalization 
(by employing the power method). Monogamy of Bell correlations can be expressed here by the fact that ${\cal{B}}_{12}$ and ${\cal{B}}_{23}$, which 
act on distinct sublattices, cannot simultaneously violate the CHSH inequality. This is shown in Fig.~\ref{f2f3}, 
where it is plotted ${\cal{B}}_{12}$ and ${\cal{B}}_{23}$ as a function of the ratio $J_2/J_1$. Note that convergence to 
the thermodynamical limit is fastly achieved, with the plots for $N=16,\,20,$ and $24$ spins approximately yielding the 
same curve. In particular, for the Heisenberg point $J_1=J_2$, we have a translation invariant system, which implies in   
nonviolation of the CHSH inequality for both ${\cal{B}}_{12}$ and ${\cal{B}}_{23}$, as ensured by Theorem~\ref{t1}. 
Moreover, observe that the first derivatives 
of ${\cal{B}}_{12}$ and ${\cal{B}}_{23}$ with respect to $J_2/J_1$, which are plotted in the insets of Fig.~\ref{f2f3}, are 
able to detect the QPT at $J_1=J_2$ through a nonanalyticity at the critical point. Concerning the monogamy bound, it 
is made explicit in Fig.~\ref{f4}, with the sum of squared 
Bell correlations $B_s={\cal{B}}_{12}^2+{\cal{B}}_{23}^2$ always being limited by $8$. 
Moreover, the saturation of the monogamy inequality is obtained in the limits 
of strong dimerization, given by $(J_2/J_1) \rightarrow 0$ and $(J_2/J_1) \rightarrow \infty$. In such dimer limits, either 
${\cal{B}}_{12}$ or ${\cal{B}}_{23}$ saturates the Tsirelson's bound $2\sqrt{2}$, with the other Bell correlation vanishing. 

\begin{figure}[!ht]
\begin{centering}
\psfrag{$\ell$}{$\ell$}
\includegraphics[scale=0.26]{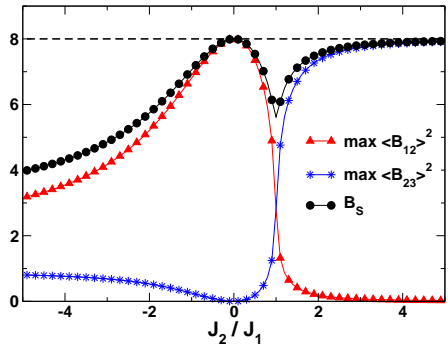}
\par\end{centering}
\caption{(Color online) Sum of squared Bell correlations (denoted as $B_s={\cal{B}}_{12}^2+{\cal{B}}_{23}^2$) as a function of $J_2/J_1$, 
in comparison with ${\cal{B}}_{12}^2$ and ${\cal{B}}_{23}^2$.} 
\label{f4}
\end{figure}

Is there a monogamy bound for N particles, which may restrict
further the maximum value of the bell inequality? If such monogamy
bound is similar to the entanglement monogamy, where the upper bound
does not increase with the number of particles $N$, then the value of
the Bell correlation would have to decrease as the number of particles
increase. However that is not the case, as we will show next.

\section{Monogamy of Bell's inequality for $N$ parties}
It has been shown in Ref.~\cite{Toner06} that 
the monogamy bound for three particles can be obtained from two lemmas: (i) For any Bell inequality 
in the setting where $N$ parties each choose from two dichotomic measurements, we have that its maximum 
value is achieved by a state that has support on a qubit at each site. Such a state can be assumed to have 
real coefficients and the observables chosen are real and traceless; (ii) For any pure state 
$|\psi_{123}\rangle\in\mathbb{C}^{2}\otimes\mathbb{C}^{2}\otimes\mathbb{C}^{2}$ with real probability 
amplitudes, we have that
$\max_{1,2}\langle B_{1,2}\rangle=2\sqrt{1+\langle\sigma_{1}^{y}\sigma_{2}^{y}\rangle^{2}-\langle\sigma_{1}^{y}\sigma_{3}^{y}\rangle^{2}-\langle\sigma_{2}^{y}\sigma_{3}^{y}\rangle^{2}}$.
Bearing in mind these two lemmas, we can derive inequality~(\ref{eq: Bell Monog}) by simple algebra. Indeed, one first obtain 
$\max_{1,2}\langle B_{1,2}\rangle^{2}+\max_{1,3}\langle B_{1,3}\rangle^{2}=8(1-\langle\sigma_{2}^{y}\sigma_{3}^{y}\rangle^{2})$, 
since $\langle\sigma_{1}^{y}\sigma_{2}^{y}\rangle^{2}-\langle\sigma_{1}^{y}\sigma_{3}^{y}\rangle^{2}$
will cancel out with $\langle\sigma_{1}^{y}\sigma_{3}^{y}\rangle^{2}-\langle\sigma_{1}^{y}\sigma_{2}^{y}\rangle^{2}$, which is 
present in $\max_{1,3}\langle B_{1,3}\rangle^{2}$. Then, by neglecting $\langle\sigma_{2}^{y}\sigma_{3}^{y}\rangle^{2}$, 
we obtain the upper bound in inequality~(\ref{eq: Bell Monog}). We will now show that this bound can be
easily extended to $N$ particles, as provided by Theorem~\ref{t2} below. 

\begin{theorem}
{\it Consider an arbitrary $N$-qubit composite system described by a density operator $\rho$, with $N>2$.  Then, Bell correlations computed 
with respect to $\rho$ obey the monogamy inequality} 
\begin{equation}
\sum_{M=2}^{N}\max_{1,M}\,\langle B_{1,M}\rangle^{2}\leq4(N-1).
\label{CHSH-N}
\end{equation}
\label{t2}
\end{theorem}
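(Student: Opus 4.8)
The plan is to bootstrap the $N$-party bound from the three-party inequality~(\ref{eq: Bell Monog}), used as a black box, which by the convexity remark in the footnote holds for arbitrary (mixed) three-qubit states. The crucial observation is that $\langle B_{1,M}\rangle$ involves measurements only on qubits $1$ and $M$, and hence depends solely on the two-qubit reduced operator $\rho_{1,M}$. Consequently, for any triple $\{1,M,M'\}$ the two maxima $\max_{1,M}\langle B_{1,M}\rangle$ and $\max_{1,M'}\langle B_{1,M'}\rangle$ are fixed by the three-qubit reduced state $\rho_{1,M,M'}=\mathrm{Tr}_{\mathrm{rest}}\,\rho$, so (\ref{eq: Bell Monog}) applies verbatim to each triple (the directions on qubit $1$ maximizing the violation with $M$ and with $M'$ being allowed to differ, exactly as in the three-party case).

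First I would write, for every unordered pair $\{M,M'\}$ with $M\neq M'$ and $M,M'\in\{2,\dots,N\}$, the three-party bound
\[
\max_{1,M}\langle B_{1,M}\rangle^{2}+\max_{1,M'}\langle B_{1,M'}\rangle^{2}\leq 8 .
\]
Then I would sum over all $\binom{N-1}{2}$ such pairs. On the left each label $M\in\{2,\dots,N\}$ appears in exactly $N-2$ pairs, so the double sum collapses to $(N-2)\sum_{M=2}^{N}\max_{1,M}\langle B_{1,M}\rangle^{2}$, while the right-hand side is $8\binom{N-1}{2}=4(N-1)(N-2)$. Dividing through by $N-2>0$ (legitimate since $N>2$) gives exactly the claimed inequality~(\ref{CHSH-N}). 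The only nonroutine ingredient is the counting identity, which is elementary.

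The route more literally suggested by the lemma-based derivation of~(\ref{eq: Bell Monog}) would instead seek an $N$-qubit analogue of lemma (ii)---a closed form for $\max_{1,M}\langle B_{1,M}\rangle^{2}$ in the $\langle\sigma^{y}_{i}\sigma^{y}_{j}\rangle$ correlations---and then sum over $M$ so that the positive $\langle\sigma^{y}_{1}\sigma^{y}_{M}\rangle^{2}$ contributions are outweighed by negative cross terms, just as the factor $8(1-\langle\sigma^{y}_{2}\sigma^{y}_{3}\rangle^{2})$ emerges for three parties. I expect this to be the main obstacle. Lemma (ii) relies on the fact that for a \emph{pure} three-qubit state the reduced operator $\rho_{1,2}$ has rank at most two (its spectrum matches that of the single remaining qubit), which forces enough relations among the entries of the correlation matrix $T$ for the ``two largest eigenvalues of $T^{T}T$'' to collapse into the clean expression. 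For $N\geq4$ the complement of $\{1,M\}$ contains more than one qubit, $\rho_{1,M}$ may have rank up to four, and this rigidity---and with it the clean closed form---is lost. This is precisely why I would favour the pairwise-summation argument, which never requires an $N$-party closed form and rests only on the established three-party inequality together with elementary counting.
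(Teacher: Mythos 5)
Your proof is correct, and the averaging argument you give is genuinely different from the paper's. The paper proceeds by case analysis: either qubit $1$ violates the CHSH inequality with no other party, so that every term obeys $\max_{1,M}\langle B_{1,M}\rangle^{2}\leq 4$ and the bound is immediate; or it violates with exactly one party, say $2$ (violation with two or more parties being excluded by the tripartite monogamy), in which case adding $\max_{1,2}\langle B_{1,2}\rangle^{2}+\max_{1,3}\langle B_{1,3}\rangle^{2}\leq 8$ to the bounds $\max_{1,M}\langle B_{1,M}\rangle^{2}\leq 4$ for all $M\neq 2,3$ gives $8+4(N-3)=4(N-1)$. You instead apply the tripartite bound to every one of the $\binom{N-1}{2}$ triples $\{1,M,M'\}$ and symmetrize; your counting is right (each label $M$ lies in $N-2$ pairs, the right-hand side totals $8\binom{N-1}{2}=4(N-1)(N-2)$, and dividing by $N-2>0$ is legitimate for $N>2$), and your justification for applying the bound to each triple --- that the Bell maxima depend only on two-qubit reduced operators and that the tripartite inequality extends to mixed states by convexity --- is precisely the same ingredient the paper relies on. The trade-off is this: your route needs no case distinction whatsoever and never even invokes the single-pair bound $\leq 4$, treating all parties symmetrically; the paper's route, in exchange, makes the extremal structure transparent (at most one pair involving qubit $1$ can violate CHSH, and saturation corresponds either to all pairs sitting at the classical value $2$ or to one near-maximal pair plus classical-valued remaining pairs), which is the picture invoked afterwards for the ferromagnetic and random-state examples. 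Finally, your concern in the closing paragraph about generalizing lemma (ii) to $N$ parties is moot: the paper, exactly like you, uses the tripartite inequality purely as a black box and never needs an $N$-party closed form.
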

\begin{proof}
In a
system of $N$ particles we have two possibilities: (i) particle $1$ does
not violate the CHSH inequality with any other party, so each of the terms in the sum
obeys $\max_{1,M}\langle B_{1,M}\rangle^{2}\leq4$, implying in inequality~(\ref{CHSH-N}); 
(ii) particle $1$ violates the CHSH inequality with another single party,
say particle 2 [violation with more than a single party is forbidden by inequality~(\ref{eq: Bell Monog})]. 
Then, by adding inequality~(\ref{eq: Bell Monog}) (for particles 1, 2, and 3) with 
$\max_{1,M}\langle B_{1,M}\rangle^2\leq 4$ (for all $M \ne 2,\,3$), 
we obtain inequality~(\ref{CHSH-N}).
\end{proof}

Differently from entanglement, the bound provided by~(\ref{CHSH-N}) increases with $N$. In fact, it allows
for long-range Bell correlations saturating the classical limit ($\mathcal{B}_{ij}$=2). A simple example of 
saturation is a ferromagnetic product state, with all spins either up or down, which implies in $\mathcal{B}_{ij}$=2 
$(\forall i,j)$. Besides, we can find applications for the monogamy bound~(\ref{CHSH-N}) in more general states, 
e.g., uniformly generated random pure states. For random states up to $N=6$ qubits, we can find out examples very 
close to saturation, with the sum of squared Bell correlations achieving the upper limit $4(N-1)$. However, these 
states appear to be very rare in large Hilbert spaces, as shown in the histogram plotted in Fig.~\ref{f5}. For $N=4$, 
for example, we find the order of $0.01\%$ for states saturating the bound in $90\%$ or more. For larger $N$, saturation 
is even sparser since, while the bound increases with $N$, both the mean value and width of the distribution decrease with $N$ 
(see inset). The computation of the histogram has been performed by randomly choosing $5 \times 10^7$ states in Hilbert space. For the 
system sizes considered in Fig.~\ref{f5}, this number of states already provides an excellent convergence for the mean value 
$(B_s^2)_m$ of the distribution, as displayed in Table~\ref{table1}.

The bound provided in Eq.~(\ref{CHSH-N}) may also be derived from other techniques, such as that used in Ref.~\cite{Kurzynski11}, 
even though it has {\it not} explicitly been obtained there. Besides being lengthier for this specific purpose, 
Ref.~\cite{Kurzynski11} requires that the directions chosen for particle 1 would have to be the same, for all pairs of particles. 
Here, this restriction is not necessary. Moreover, we have also shown the tightness of the bound and applied it for random states.

\begin{figure}[!ht]
\begin{centering}
\psfrag{$\ell$}{$\ell$}
\includegraphics[scale=0.22]{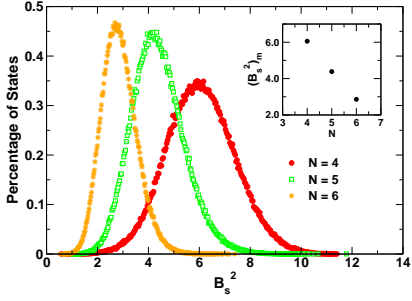}
\par\end{centering}
\caption{(Color online) Sum $B_s^2 = \sum_{M=2}^{N}\max_{1,M}\,\langle B_{1,M}\rangle^{2}$ of squared Bell correlations in random 
states of $N$ spins-1/2 particles. Inset: Mean value of the distribution as a function of the number of spins.} 
\label{f5}
\end{figure}
 
\begin{table}[hbt]
\caption[table1]{Mean value $(B_s^2)_m$ of the distribution as a function of the number of qubits for random pure states. }
\vspace{0.2cm}
\begin{tabular}{ccccc}
\hline
${\textrm{States}}$ &     $N=3$    &      $N=4$     &      $N=5$   &     $N=6$      \\ \hline
    100                & \,7.03812\, & \,5.89693\,  & \,4.32549\, & \,2.81190\,   \\ 
    1 000              & \,6.88858\, & \,5.94993\,  & \,4.38376\, & \,2.83165\,   \\ 
    100 000            & \,6.93172\, & \,6.05699\,  & \,4.38627\, & \,2.86320\,   \\ 
    1 000 000          & \,6.93124\, & \,6.05616\,  & \,4.38381\, & \,2.86035\,   \\ 
    25 000 000         & \,6.93314\, & \,6.05559\,  & \,4.38435\, & \,2.86022\,   \\ 
    50 000 000         & \,6.93337\, & \,6.05556\,  & \,4.38435\, & \,2.86022\,   \\ \hline
\end{tabular}
\label{table1}
\end{table}

\section{Conclusion} 
In conclusion, we have shown that not only the behavior of entanglement and quantum discord at QPTs are enforced by a monogamy trade-off, but also of nonlocality as given in terms of Bell correlations. In particular, the monogamy of Bell's inequality explains the surprising behavior of nonviolation of the CHSH inequality by highly entangled particles in translation invariant systems. Together with the extension of the monogamy bound to the $N$-partite scenario, this  provides a full characterization of pairwise qubit  correlations in the presence of translation symmetry: bipartite entanglement 
is short-ranged, quantum discord is long-ranged, and nonlocal (Bell) correlations cannot be present whatsoever. Moreover, we have extended the tripartite monogamy bound to the $N$-partite scenario, showing
that the bound increases with N, allowing values close to the classical limit for distant particles,
but never violating it. We also obtained evidences that states saturating the bound become rare as
the number of particles increase. Higher-dimensional systems (qudits) remain as a further challenge. 

Note added - After finishing this work, Michael Wolf and Daniel Calvalcanti pointed out to us that using the results 
of \cite{Terhal03} we can strengthen our Theorem 1, since translation invariance implies that the reduced density matrix
of two spins has a (1,2)-symmetric quasi-extension (it is two-sharable). From the second theorem of Ref.~\cite{Terhal03}, 
such a state will not violate any Bell inequality with two measurements on one side and infinity measures on the other side.

\begin{acknowledgments}

We acknowledge the Brazilian agencies CNPq and FAPERJ for financial support. This 
work is also supported by the Brazilian National Institute for Science and 
Technology of Quantum Information (INCT-IQ).

\end{acknowledgments}

\vspace{1cm}
\section{Erratum}

There is a hidden hypothesis in Theorem 1. Besides requiring translation 
invariance as a main condition for the nonviolation of the CHSH inequality, 
it is also demanded to impose the following condition for the validity of Theorem 1: 
{\it for any two qubits labeled by  $1$ and $2$ separated by a distance $r$, there also exists 
a third independent qubit at distance r from qubit $1$}.
This condition trivially holds for either infinite chains or finite chains with an odd number 
of sites. However, it may be not obeyed for very specific qubit pairs in finite chains with an 
even number of sites. 

So, for the case of translation invariant finite chains with even number of sites, a violation 
of the CHSH inequality between spins $i$ and $i+N/2+1$ may occur. This is the farthest apart 
qubit pair. In particular, we expect such a violation to be highly unlikely in physical hamiltonian 
with short-range correlations.

All other results of the letter, including Theorem 2, remains valid. We should mention that this 
has actually been noticed by Ref. \cite{YuSun}.

\end{document}